\documentclass[letterpaper, 10 pt, conference]{ieeeconf}  

\IEEEoverridecommandlockouts                              

\usepackage{graphics} 
\usepackage{epsfig} 
\usepackage{mathptmx} 
\usepackage{times} 
\usepackage{amsmath} 
\usepackage{amssymb}  
\usepackage{booktabs}
\usepackage{url}
\usepackage{hyperref}

\newcommand{\tp}{\normalfont \text{T}}

\newcommand{\Val}{\operatorname{Val}}

\newtheorem{remark}{Remark}

\newtheorem{theorem}{Theorem}
\newtheorem{assumption}{Assumption}

\title{\LARGE \bf Principled Authority Switching for Shared Autonomy in Human-Robot Teams
}

\author{Sandeep Banik$^{1}$ and Naira Hovakimyan$^{1}$
\thanks{$^{1}$Both the authors are affiliated with the department of Mechanical Science and Engineering, University of Illinois Urbana-Champaign, Illinois, USA; email:
        {\tt\small baniksan@illinois.edu, nhovakim@illinois.edu}}}

\begin{document}

\maketitle
\thispagestyle{empty}
\pagestyle{empty}

\begin{abstract}
Shared autonomy requires principled mechanisms for allocating and transferring control between a human and an autonomous agent.
Existing approaches often rely on blending control inputs or heuristic switching rules, which lack theoretical guarantees and fail to account for the human's likelihood of intervention.
This paper presents \texttt{Flip-Team}, a cooperative framework for authority switching in shared autonomy.
We formulate the control switching problem as a team-optimal decision problem in which authority transitions are embedded into the system dynamics, yielding optimal switching policies rather than ad hoc rules.
We model the human's override propensity, the likelihood of exercising override authority, and derive a critical threshold that determines when human intervention is cost-effective.
For linear-quadratic systems, we derive closed-form switching conditions and value function recursions, enabling efficient computation independent of the continuous state dimension.
We further propose an online method to estimate override propensity from observed interventions, enabling adaptive switching without prior calibration.
Evaluation on a quadrotor altitude regulation task demonstrates that \texttt{Flip-Team} achieves lower cost than always-human and always-autonomous baselines, with selective authority switching that balances human adaptability and autonomous efficiency.
\end{abstract}

\section{Introduction}\label{sec:Intro}

Modern cyber–physical systems (CPS) increasingly rely on \textit{shared autonomy}, where humans and automated agents dynamically share control authority. 
Such shared paradigm is crucial in safety-critical domains, such as autonomous driving, assistive robotics, and teleoperation, where autonomous systems excel at computation but lack adaptability, while humans provide contextual reasoning but may be expensive~\cite{javdaniSharedAutonomyHindsight2015, nikolaidisHumanRobotMutualAdaptation2017}. 
Failures to coordinate takeovers have been implicated in real-world incidents from autopilot disengagement in aviation~\cite{gouraudAutopilotMindWandering2017} to autonomous vehicle crashes~\cite{alambeigiCrashThemesAutomated2020}, highlighting the need for principled frameworks for cooperative takeover.
This work adopts a cooperative formulation between a human and a robot to act as a unified team pursuing a common objective, while retaining distinct roles: the human provides adaptive oversight with override capability, and the autonomous system offers consistent, low-cost control.

Early approaches to shared autonomy typically combined human and robot commands through a human-in-the-loop paradigm~\cite{draganPolicyblendingFormalismShared2013}. 
In these methods, the robot predicts the human’s intended goal and then determines the level of assistance to provide~\cite{aarnoMotionIntentionRecognition2008}. 
A common strategy treats human input as a noisy estimate of intent and computes a weighted blend with the robot's action~\cite{loseyReviewIntentDetection2018,abbinkTopologySharedControl2018}, implemented via Bayesian inference~\cite{aarnoMotionIntentionRecognition2008}, POMDP planning~\cite{javdaniSharedAutonomyHindsight2015}, or linear arbitration.
Blending methods rely heavily on accurate intent prediction and hand-tuned blending rules, often retaining task-specific heuristics without guarantees of stability or generalization~\cite{jeonSharedAutonomyLearned2020, reddySharedAutonomyDeep2018,javdaniSharedAutonomyHindsight2015}.
Critically, these blending approaches require a human-in-the-loop paradigm, where the operator must provide input at every time step.
This approach places substantial cognitive burden on the human, limiting scalability to prolonged or multi-task operations.
In contrast, we adopt a human-on-the-loop paradigm, where the autonomous system operates independently, while the human retains supervisory override authority, intervening only when necessary.

Beyond blending, two alternative paradigms have emerged. Game-theoretic approaches model shared autonomy as an interaction between decision-making agents, capturing mutual influence and strategic adaptation~\cite{liDifferentialGameTheory2019, sadighPlanningAutonomousCars2016, nikolaidisGameTheoreticModelingHuman2017}.
Frameworks for adjustable autonomy and mixed-initiative control dynamically allocate authority based on task demands, communication delays, or operator workload~\cite{scerriAdjustableAutonomyReal2002}, while adaptive autonomy shifts control according to human performance, trust, or safety thresholds~\cite{loseyPhysicalInteractionCommunication2022}.
These approaches emphasize flexibility but remain heuristic, lacking principled criteria for when to switch.
The present work addresses this gap by providing a framework that yields optimal switching policies under asymmetric authority, where the human retains override capability.

This work builds on the \texttt{FlipDyn} framework~\cite{banikFlipDynGameResource2022a}, which models authority (switching) over a dynamical system as a game.
While \texttt{FlipDyn} and related work~\cite{vandijkFlipItGameStealthy2013, banikFlipDynGraphsResource2025} focus on adversarial settings, we extend it to cooperative human-robot teams with a shared objective.
The framework applies to systems with explicit authority handoffs where control is mutually exclusive -- such as teleoperation, supervisory control, and semi-autonomous navigation. The main contributions are: 
\begin{enumerate}
    \item \textbf{Principled switching framework:} We formulate authority switching as a cooperative problem with asymmetric costs and human override capability, yielding optimal switching policies without heuristic tuning.
    \item \textbf{Override threshold derivation:} We derive a critical threshold $\overline{p}_k$ on human override propensity that determines when human takeover is cost-effective, providing interpretable design guidelines for real systems.
\end{enumerate}
We further derive a critical override threshold and outline an online estimation approach for $p_k$ from observed interventions; evaluation of the online estimator is provided in the accompanying technical report~\cite{banik2026flipteam_ext}.
The remainder of this paper is organized as follows. 
Section~\ref{sec:Framework} presents the \texttt{Flip-Team} framework, including the switching dynamics and cost structure. 
Section~\ref{sec:SwitchingPolicy} derives the optimal switching policy for linear-quadratic systems.
Section~\ref{sec:Threshold} derives the override threshold and describes the online estimation of override propensity.
Section~\ref{sec:Evaluation} evaluates the framework on a 1D quadrotor altitude control task with comparisons to baseline methods. 
Section~\ref{sec:Conclusion} concludes with directions for future work.
\section{Framework}\label{sec:Framework}
Consider a discrete-time dynamical system controlled either by a human or an autonomous agent. 
The \texttt{FlipDyn} state, $\alpha_{k} \in \{\text{H},\text{A}\}$ indicates whether the human ($\alpha_k = \text{H}$) or the autonomous agent ($\alpha_k = \text{A}$) has control at time $k$. 
The state evolution under human is given by:
\begin{align}\label{eq:human_dynamics}
	x_{k+1} = F_{k}^{\text{H}}(x_k,u_k),
\end{align}
where $k \in \mathcal{K} := \{1,2,\dots, L\} \subset \mathbb{N}$ denotes the discrete-time index, $x_{k} \in \mathbb{R}^{n}$ is the system state, $u_{k} \in \mathbb{R}^m$ is the control input, and $F_{k}^{\text{H}}: \mathbb{R}^{n} \times \mathbb{R}^{m} \rightarrow \mathbb{R}^{n}$ is the state transition function.
Similarly, the state evolution under an autonomous agent is:
\begin{align}\label{eq:autonomous_dynamics}
	x_{k+1} = F_{k}^{\text{A}}(x_k,w_k),
\end{align}
where $F_{k}^{\text{A}}: \mathbb{R}^{n} \times \mathbb{R}^{p} \rightarrow \mathbb{R}^{n}$ is the state transition function under an autonomous agent with control input $w_k \in \mathbb{R}^{p}$. 
We describe a takeover through the action $\pi^{j}_k \in \{0,1\}$, for the agent $j \in \{\text{H},\text{A}\}$ at time $k$, where $j = \text{H}$ denotes the human and $j=\text{A}$ denotes the autonomous agent.
The action $\pi_{k}^{j} = 1, \forall j$ corresponds to \emph{takeover/request to takeover} and $\pi_{k}^{j} = 0$ \emph{remaining idle}.
The binary \texttt{FlipDyn} state updates based on the agent's action and prior \texttt{FlipDyn} state. Given $\alpha_{k} = \text{H}$, the \texttt{FlipDyn} state at time $k+1$ is:
\begin{align}\label{eq:flip_state_H}
	\alpha_{k+1} &= \begin{cases}
		\alpha_{k}, & \text{if } \{\pi^{\text{H}}_{k} = 0,  \pi^{\text{A}}_{k} = 0\}, \\
        \text{H}, & \text{if } \{\pi^{\text{H}}_{k} = 0,  \pi^{\text{A}}_{k} = 1\}, \\
        \text{H}, & \text{if } \{\pi^{\text{H}}_{k} = 1,  \pi^{\text{A}}_{k} = 0\}, \\
        \text{A}, & \text{otherwise},
	\end{cases}
\end{align}
where $\pi^{\text{H}}_k = 0$ denotes the human staying idle or retaining the control of the system and $\pi^{\text{H}}_k = 1$ represents request to takeover.
The first three cases correspond to human retaining control; the last corresponds to consensual handoff to autonomy.
Similarly, for $\alpha_{k} = \text{A}$, \texttt{FlipDyn} state update is:
\begin{align}\label{eq:flip_state_A}
	\alpha_{k+1} &= \begin{cases}
		\alpha_{k}, & \text{if } \{\pi^{\text{H}}_{k} = 0,  \pi^{\text{A}}_{k} = 0\}, \\
        \text{A}, & \text{if } \{\pi^{\text{H}}_{k} = 0,  \pi^{\text{A}}_{k} = 1\}, \\
        \text{H}, & \text{if } \{\pi^{\text{H}}_{k} = 1,  \pi^{\text{A}}_{k} = 0\} \text{ with probability } p_{k}, \\
        \text{A}, & \text{if } \{\pi^{\text{H}}_{k} = 1,  \pi^{\text{A}}_{k} = 0\} \text{ with probability } 1-p_{k}, \\
        \text{H}, & \text{otherwise},
	\end{cases}
\end{align}
where $\pi^{\text{H}}_k = 1$ corresponds to human agent takeover and $\pi^{\text{A}}_k = 1$ corresponds to request to takeover.
The parameter $p_k \in (0,1)$  represents the likelihood that the human exercises his/her override authority when the autonomous agent has not requested handoff.
The parameter $p_k$ is a compact behavioral abstraction encoding net intervention likelihood; while it is influenced by trust, cognitive load, and task criticality, it does not model these factors directly. 
Calibrating $p_k$ to real operator behavior requires empirical validation, which we leave to future work.
The autonomous agent retains control of the system deterministically when human agent is idle (second condition) and with probability $1-p_{k}$ (fourth condition) when human agent chooses to takeover.
The human agent takes over with probability $p_k$ when the autonomous agent does not request to takeover ($\pi^{\text{A}}_k = 0$) and deterministically takes over when both $\pi^{\text{A}}_k = \pi^{\text{H}}_k = 1$. Notice that the difference between~\eqref{eq:flip_state_H} and~\eqref{eq:flip_state_A} is the human agent being uncertain with probability $p_k$ and ability to takeover the system irrespective of the autonomous agent's actions.
Such a model captures the current design of shared control system where the human agent has higher authority over autonomous agents. 
Takeovers are mutually exclusive, i.e., at any given time, only one agent is in control.
The continuous state $x_{k+1}$ at time $k+1$ is dependent on $\alpha_{k+1}$. 
In this work, we aim to solve for a takeover strategy between the human and autonomous agent.
Given a non-zero initial state $x_{1}$, we pose the takeover problem as an \emph{identical interest dynamic} game described by the dynamics~\eqref{eq:human_dynamics},~\eqref{eq:autonomous_dynamics},~\eqref{eq:flip_state_H} and~\eqref{eq:flip_state_A} over a finite-horizon $L$, where both the agents aim to minimize a net cost given by:
\begin{equation}\label{eq:obj_def_alpha}
	\begin{aligned}
		J(x_{1}, \alpha_{1}, \{\pi^{\text{H}}_{\mathbf{L}}\}, \{\pi^{\text{A}}_{\mathbf{L}}\}) & = g_{L+1}(x_{L+1}, \alpha_{L+1}) + \sum_{t=1}^{L} g_t(x_t, \alpha_t) \\  & + \pi_{t}^{\text{H}|\alpha_{t}}h_t(x_t) + \pi^{\text{A}|\alpha_{t}}_{t}a_t(x_t), 
	\end{aligned}
\end{equation}
where $\{\pi_{\mathbf{L}}^j\} := \{\pi_1^{j|\alpha_{j}}, \dots, \pi_{L}^{j|\alpha_{j}}\}$ for $j \in \{\text{H},\text{A}\}$.
The state cost $g_t(x_t, \alpha_t): \mathbb{R}^{n} \times \{\text{H},\text{A}\} \rightarrow \mathbb{R}$ captures control performance under each agent (e.g., tracking error, energy), while $h_t(x_t), a_t(x_t): \mathbb{R}^{n} \rightarrow \mathbb{R}$ are takeover costs representing cognitive load, attention switching, or transition risk for human and autonomy respectively. Asymmetric costs $g^{\text{H}}_t \neq g^{\text{A}}_t$ and $h_t \neq a_t$ encode differences in control effectiveness and takeover burden without requiring separate utility functions.
We term the dynamic game~\eqref{eq:obj_def_alpha} between the human and autonomous agent as \emph{\texttt{Flip-Team}}, where both agents aim to optimize their own takeover strategies.
In particular, we consider a subclass of games known as identical interest games~\cite{hespanha2017noncooperative}, where both agents seek to minimize a common cost function~\eqref{eq:obj_def_alpha}. 

Over finite-horizon $L$, let $\pi^{\text{H}}_{\mathbf{L}} := \{\pi_{1}^{\text{H}|\alpha_{1}}, \pi_{2}^{\text{H}|\alpha_{2}}, \dots, \pi_{L}^{\text{H}|\alpha_{L}}\}$ and $\pi_{\mathbf{L}}^{\text{A}} := \{\pi_{1}^{\text{A}|\alpha_{1}}, \pi_{2}^{\text{A}|\alpha_{2}}, \dots, \pi_{L}^{\text{A}|\alpha_{L}}\}$ denote the sequence of human and autonomous switching policies. 
Since both agents share a common objective, the optimal joint policy $(\pi_{\mathbf{L}}^{\text{H}*}, \pi_{\mathbf{L}}^{\text{A}*})$ minimizes the total cost by solving:
\begin{equation*}
    (\pi_{\mathbf{L}}^{\text{H}*}, \pi_{\mathbf{L}}^{\text{A}*}) = \arg\min_{\pi_{\mathbf{L}}^{\text{H}}, \pi_{\mathbf{L}}^{\text{A}}} J(x_1, \alpha_1, \pi_{\mathbf{L}}^{\text{H}}, \pi_{\mathbf{L}}^{\text{A}}).
\end{equation*}
This team-optimal solution is also a Nash equilibrium: neither agent can reduce cost by unilaterally deviating from the joint optimum~\cite{hespanha2017noncooperative}. 
In the next section, we derive the optimal switching policies and the conditions under which authority transfers occur.

\section{Optimal Switching Policy}\label{sec:SwitchingPolicy}
 
The central question of \emph{when control authority transfers between agents} reduces to comparing the expected future cost under each authority mode, accounting for switching costs and the probability of successful human override.

\subsection{General Formulation}

For general nonlinear dynamics~\eqref{eq:human_dynamics}--\eqref{eq:autonomous_dynamics}, the optimal switching policy can be characterized through dynamic programming.
At each time step $k$ at state $x$, let $V_k^{\text{H}}(x)$ and $V_k^{\text{A}}(x)$ denote the value function representing the expected cost-to-go when the human and autonomous agent hold control as a function of the state, respectively.
The switching decision at state $x$ and $\alpha$ depends on comparing these values against the costs of transitioning authority.

For instance, when $\alpha_k = \text{A}$, the human agent must weigh the benefit of taking control (potentially lower future cost under human authority) against the cognitive switching cost $h_k(x)$ and the fact that override succeeds only with probability $p_k$.
This trade-off can be formalized as a $2 \times 2$ decision problem, where rows correspond to human actions (idle or override) and columns to autonomous actions (idle or request handoff).
The team-optimal policy selects the action pair minimizing expected cost.

For general systems, solving this recursion requires discretizing the state space or employing approximate dynamic programming, which scales poorly with dimension.
A complete treatment for nonlinear dynamics, non identical costs, and an iterative linearization approach, is provided in~\cite{banik2025flipcoopcooperativetakeovers}.

\subsection{Linear-Quadratic Formulation}

In this work, we focus on linear dynamics with quadratic costs, a setting that admits closed-form solutions while capturing many practical systems.
Moreover, nonlinear systems can often be handled within this framework by \emph{linearizing} about a nominal trajectory, as is standard in model predictive control and iterative LQR.
Consider linear dynamics under human and autonomous control described as:
\begin{align}
    \label{eq:H_control_dynamics}
    x_{k+1} &= E_k x_k + B_k u_k, \\
    \label{eq:A_control_dynamics}
    x_{k+1} &= E_k x_k + C_k w_k,
\end{align}
where $E_k \in \mathbb{R}^{n \times n}$ is the state transition matrix, $B_k \in \mathbb{R}^{n \times m}$ and $C_k \in \mathbb{R}^{n \times p}$ are control input matrices for the human and autonomous agent, respectively.
The stage and switching costs are quadratic in state:
\begin{equation}\label{eq:cost_quad}
    g_k(x,\alpha_k) = x^{\top} G_k^{\alpha_k} x, \quad
    h_k(x) = x^{\top} H_k x, \quad 
    a_k(x) = x^{\top} A_k x,
\end{equation}
where $G_k^{\alpha_k}, H_k, A_k \in \mathbb{S}^{n \times n}_{+}$ are positive semidefinite matrices representing state cost, human switching cost, and autonomous switching cost. In addition to the cost structure, we assume the control policies of both the agents to be linear in the state, formally stated in the following assumption.  

\begin{assumption}\label{ast:linear_control_space}
    We restrict the control policies to be linear state-feedback in the continuous state $x$, described as:
    \begin{equation}\label{eq:linear_FD_control}
        u_k(x) := K_kx, \quad w_k(x) := W_kx,
    \end{equation}
    where $K_k \in \mathbb{R}^{m \times n}$ and $W_k \in \mathbb{R}^{p \times n}$ are human and autonomous agent control gain matrices, respectively. 
\end{assumption}

Assumption~\ref{ast:linear_control_space} yields closed-loop dynamics with $\tilde{B}_k := E_k + B_k K_k$ and $\tilde{C}_k := E_k + C_k W_k$.
Under this structure, the value function admits a quadratic form:
\begin{equation}\label{eq:value_parametric}
    V_k^{\text{H}}(x) = x^{\top} P_k^{\text{H}} x, \quad V_k^{\text{A}}(x) = x^{\top} P_k^{\text{A}} x,
\end{equation}
where $P_k^{\text{H}}, P_k^{\text{A}} \in \mathbb{S}^{n \times n}_{+}$ are value function matrices computed via backward recursion.

The cost-to-go is determined via a cost-to-go matrix in each \texttt{FlipDyn} state $\alpha_{k} = \text{H}$ and $\alpha_{k} = \text{A}$, represented by $\Xi_{k+1}^{\text{H}} \in \mathbb{R}^{2 \times 2}$ and $\Xi_{k+1}^{\text{A}} \in \mathbb{R}^{2 \times 2}$, respectively.
The entries of the cost-to-go matrix $\Xi^{\text{H}}_{k+1}$ corresponding to each pair of takeover actions are given by:
\begin{equation}\label{eq:Cost_to_go_H}
    \begin{aligned}
		& \begin{matrix} & \hphantom{0000} \text{\small Idle} & & \hphantom{v_{k+1}^0(.,.)00000} \text{ \small Takeover}\end{matrix} \\
		\begin{matrix} \text{\small Idle} \\[5pt] \text{\small Request}\\\text{\small to takeover} \end{matrix} & \underbrace{\begin{bmatrix}
			V_{k+1}^{\text{H}}(x_{k+1}) &  V_{k+1}^{\text{H}}(x_{k+1}) + a_k(x)  \\[8pt]
			V_{k+1}^{\text{H}}(x_{k+1}) + h_k(x) &  V_{k+1}^{\text{A}}(x_{k+1}) + h_k(x) + a_k(x) \\[5pt] 
		\end{bmatrix}}_{\Xi_{k+1}^{\text{H}}}
	\end{aligned}.
\end{equation}
The row and column entries of $\Xi^{\text{H}}_{k+1}$ are based on the agent's actions, described by~\eqref{eq:flip_state_H},~\eqref{eq:flip_state_A} and the associated dynamics~\eqref{eq:human_dynamics},~\eqref{eq:autonomous_dynamics}. 
The first row entries correspond to the human agent remaining idle, which prevents the autonomous agent to takeover despite the column action of takeover.
The second row entries correspond to action of request to takeover, and transition to the autonomous agent ($V^{\text{A}}_{k+1}()$) only when the autonomous agent is ready (column action of takeover). 
The entries of $\Xi_{k+1}^{\text{H}}$ couple the value in each \texttt{FlipDyn} state. 
At time $k$, at the state $x$ and for $\alpha_k=\text{H}$, the  value function satisfies
\begin{equation}
    \label{eq:V_k^0_cost_to_go}
	V^{\text{H}}_k(x) = g_k(x,\text{H})  + \Val(\Xi^{\text{H}}_{k+1}), 
\end{equation}
where $\Val(X_{k+1}^{\alpha_{k}}):= \min_{y_{k}^{\alpha_{k}}} \min_{z_{k}^{\alpha_{k}}} y_{k}^{{\alpha_{k}}^{\tp}}X_{k+1}z_{k}^{\alpha_{k}}$ represents the optimal value of the identical matrix $X_{k+1}$ for $\alpha_{k}$.

Similarly, for $\alpha_k = \text{A}, \forall k$, the cost-to-go matrix entries $\Xi_{k+1}^{\text{A}}$ are:
\begin{equation}\label{eq:Cost_to_go_A}
    \begin{aligned}
		& \begin{matrix} & \hphantom{0000} \text{\small Idle} & & \hphantom{v_{k+1}^0000} \text{\small Request to takeover}\end{matrix} \\
		\begin{matrix} \text{\small Idle} \\[10pt] \text{\small Takeover}\end{matrix} & \underbrace{\begin{bmatrix}
			V_{k+1}^{\text{A}}(x_{k+1}) \hphantom{000}  &  \hphantom{00} V_{k+1}^{\text{A}}(x_{k+1}) + a_k(x) \hphantom{0}  \\[5pt]
			\begin{matrix}
                    p_{k}V_{k+1}^{\text{H}}(x_{k+1}) + h_k(x)  \\[3pt]
                        + (1-p_{k})V_{k+1}^{\text{A}}(x_{k+1})       
                \end{matrix}
              &  \begin{matrix}
                    V_{k+1}^{\text{H}}(x_{k+1}) + h_k(x) \\
                        + a_k(x)       
                \end{matrix}
		\end{bmatrix}}_{\Xi_{k+1}^{\text{A}}}.
	\end{aligned}
\end{equation}
Analogous to~\eqref{eq:V_k^0_cost_to_go} the value function for $\alpha_{k} = \text{A}$ satisfies:
\begin{equation}
    \label{eq:V_k^1_cost_to_go}
	V^{\text{A}}_k(x) = g_k(x,\text{A}) +  \Val(\Xi^{\text{A}}_{k+1}).
\end{equation}
With value functions established in each of the \texttt{FlipDyn} states, in the following result, we will characterize the optimal takeover policies and values over the finite-horizon.

\begin{theorem}\label{thm:NE_Val_FDC_H}
    \textbf{(Case $\alpha_k = \text{H}$)} The optimal switching policies of the \texttt{Flip-Team} game~\eqref{eq:obj_def_alpha} for every $k \in \mathcal{K}$, subject to the dynamics~\eqref{eq:H_control_dynamics},~\eqref{eq:A_control_dynamics}, with quadratic and takeover costs~\eqref{eq:cost_quad} and \texttt{FlipDyn} dynamics~\eqref{eq:flip_state_H},~\eqref{eq:flip_state_A}, are given by:
    \begin{align}\label{eq:TP_quadcost_H}
            \{\pi^{\text{H}*|\text{H}}_{k}, \pi^{\text{A}*|\text{H}}_{k}\} = \begin{cases}
            \{0,0\} (\text{\small retain}), & \text{if } \tilde{P}_{k+1} + H_{k} + A_{k} \succ 0, \\[3 pt]
            \{1,1\} (\text{\small handoff}), & \text{otherwise}.
            \end{cases} 
    \end{align}
    The value is given by:
    \begin{align}\label{eq:Val_quadcost_H}
        P_{k}^{\text{H}} = 
        \begin{cases}
            \begin{aligned}
				& G_k^{\text{H}} + \tilde{B}_{k}^{\tp}P_{k+1}^{\text{H}}\tilde{B}_{k},
            \end{aligned} &\text{if } \tilde{P}_{k+1} + H_{k} + A_{k} \succ 0, \\[5pt]
            \begin{aligned}
				& G_k^{\text{H}} + \tilde{C}_{k}^{\tp}P_{k+1}^{\text{A}}\tilde{C}_{k} \\ & + H_{k} + A_{k},
            \end{aligned} &\text{otherwise},
		\end{cases} 
    \end{align}
    where $\tilde{P}_{k+1} :=  \tilde{C}_{k}^{\tp}P_{k+1}^{\text{A}}\tilde{C}_{k} - \tilde{B}_{k}^{\tp}P_{k+1}^{\text{H}}\tilde{B}_{k}$.
    
    \noindent \textbf{(Case $\alpha_k = \text{A}$)} The optimal switching policies are given by:
    \begin{align}
    \begin{split}\label{eq:TP_quadcost_A}
            \{\pi_{k}^{\text{H}*|\text{A}},\pi_{k}^{\text{A}*|\text{A}}\}  = \begin{cases}
            \{0,0\} (\text{\small retain}), & \text{if } \ 
                \begin{matrix}
                    \tilde{P}_{k+1} \prec \dfrac{H_{k}}{p_{k}}
                \end{matrix}, \\[8pt]
            \{1,0\} (\text{\small override}), & \text{if } \ 
                \begin{matrix}
                    \tilde{P}_{k+1} \preceq \dfrac{A_{k}}{1 - p_{k}}  \\[5pt]
                    \tilde{P}_{k+1} \succeq \dfrac{H_{k}}{p_{k}},
                \end{matrix} \\
            \{1,1\} (\text{\small handoff}), & \text{otherwise.}
            \end{cases} 
    \end{split}
    \end{align}
    The value is given by:
    \begin{align}\label{eq:Val_quadcost_A}
        P_{k}^{\text{A}} = 
        \begin{cases}
            \begin{aligned}
				& G_k^{\text{A}} + \tilde{C}_{k}^{\tp}P_{k+1}^{\text{A}}\tilde{C}_{k},
            \end{aligned} &\text{if } \begin{matrix}
                    \tilde{P}_{k+1} \prec \dfrac{H_{k}}{p_{k}}
                \end{matrix}, \\[8pt]
            \begin{aligned}
				& G_k^{\text{A}} + p_{k}\tilde{B}_{k}^{\tp}P^{\text{H}}_{k+1}\tilde{B}_{k} + H_{k}  \\[5pt] & + (1 - p_{k})\tilde{C}_{k}^{\tp}P^{\text{A}}_{k+1}\tilde{C}_{k},
            \end{aligned} &\text{if } \begin{matrix}
                    \tilde{P}_{k+1}  \preceq \dfrac{A_{k}}{1 - p_{k}}  \\
                    \tilde{P}_{k+1} \succeq \dfrac{H_{k}}{p_{k}},
                \end{matrix} \\
            \begin{aligned}
                & G_k^{\text{A}} + \tilde{B}_{k}^{\tp}P_{k+1}^{\text{H}}\tilde{B}_{k} + \\ & A_{k} + H_{k}
            \end{aligned}
             & \text{otherwise}.
		\end{cases} 
    \end{align}
    The terminal conditions for the recursions~\eqref{eq:Val_quadcost_H} and~\eqref{eq:Val_quadcost_A} are:
    \begin{equation*}
        P_{L+1}^{\text{H}} := G_{L+1}^{\text{H}}, \quad P_{L+1}^{\text{A}} := G_{L+1}^{\text{A}}.
    \end{equation*} 
\end{theorem}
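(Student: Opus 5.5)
The proof is a backward induction on $k$ from the terminal conditions $P_{L+1}^{\text{H}}=G_{L+1}^{\text{H}}$, $P_{L+1}^{\text{A}}=G_{L+1}^{\text{A}}$. The inductive hypothesis is that $V_{k+1}^{\text{H}}(x)=x^{\top}P_{k+1}^{\text{H}}x$ and $V_{k+1}^{\text{A}}(x)=x^{\top}P_{k+1}^{\text{A}}x$. Under Assumption~\ref{ast:linear_control_space}, the successor state is $x_{k+1}=\tilde{B}_kx$ when the human holds control at $k+1$ and $\tilde{C}_kx$ when the autonomy holds control. Hence every entry of $\Xi^{\text{H}}_{k+1}$ and $\Xi^{\text{A}}_{k+1}$ is a quadratic form in $x$, built from $\tilde{B}_k^{\top}P^{\text{H}}_{k+1}\tilde{B}_k$, $\tilde{C}_k^{\top}P^{\text{A}}_{k+1}\tilde{C}_k$, $H_k$ and $A_k$.

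The key observation is that the game is identical-interest, so $\Val$ is simply the minimum over the four entries; no mixed strategies are needed. The minimizing pure pair is therefore the team policy, and by the remark before the theorem it is also a Nash equilibrium.

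For $\alpha_k=\text{H}$ the four entries are $x^{\top}\tilde{B}_k^{\top}P^{\text{H}}_{k+1}\tilde{B}_kx$ plus $0$, $a_k$, $h_k$, and $x^{\top}(\tilde{C}_k^{\top}P^{\text{A}}_{k+1}\tilde{C}_k+H_k+A_k)x$. Since $H_k,A_k\succeq0$, the entries $\{0,1\}$ and $\{1,0\}$ are dominated by $\{0,0\}$. The only comparison left is retain versus handoff, whose difference is $x^{\top}(\tilde{P}_{k+1}+H_k+A_k)x$. Retain is optimal when this is positive, which gives \eqref{eq:TP_quadcost_H}. Adding $g_k(x,\text{H})=x^{\top}G_k^{\text{H}}x$ to the minimal entry gives \eqref{eq:Val_quadcost_H}.

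For $\alpha_k=\text{A}$, the entry $\{0,1\}$ is dominated by $\{0,0\}$, leaving three candidates. Writing $M_B=\tilde{B}_k^{\top}P^{\text{H}}_{k+1}\tilde{B}_k$ and $M_C=\tilde{C}_k^{\top}P^{\text{A}}_{k+1}\tilde{C}_k$, these are:
\begin{itemize}
\item retain: $M_C$;
\item override: $p_kM_B+(1-p_k)M_C+H_k$;
\item handoff: $M_B+H_k+A_k$.
\end{itemize}
Pairwise differences then give the thresholds. Retain minus override equals $p_k\tilde{P}_{k+1}-H_k$, so override beats retain iff $\tilde{P}_{k+1}\succeq H_k/p_k$. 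Override minus handoff equals $(1-p_k)\tilde{P}_{k+1}-A_k$, so override beats handoff iff $\tilde{P}_{k+1}\preceq A_k/(1-p_k)$. Finally, retain minus handoff equals $\tilde{P}_{k+1}-H_k-A_k$. Collecting these regions gives \eqref{eq:TP_quadcost_A}, and adding $G_k^{\text{A}}$ gives \eqref{eq:Val_quadcost_A}. This also closes the induction step, since the resulting $P_k^{\text{H}}$ and $P_k^{\text{A}}$ are again quadratic-form matrices (sums of PSD terms when the $P_{k+1}$ are PSD).

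The main obstacle is that the minimum of several quadratic forms is not in general a quadratic form, so the matrix inequalities in the statement must be read appropriately. One reading is as conditions on $x^{\top}(\cdot)x$ at the given state $x$, which makes the policy state-dependent through the sign of the form. The other is as Loewner orderings, under which the minimizer is the same for all $x$ and $V_k$ is exactly quadratic. I would adopt the Loewner interpretation, noting it is what the notation $\succ,\preceq$ asserts, and check that the three regions in the $\alpha_k=\text{A}$ case are consistent.

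The consistency check goes as follows. When $H_k/p_k\preceq\tilde{P}_{k+1}\preceq A_k/(1-p_k)$, override is simultaneously best against both alternatives. When $\tilde{P}_{k+1}\prec H_k/p_k$, retain beats override. Retain must also beat handoff there, and this needs $H_k/p_k\preceq H_k+A_k$ or a similar compatibility relation. This is where care is required: I would either verify it via the ordering of the thresholds, or note that in the "otherwise" region handoff is the minimizer by elimination. Where the forms are not comparable, the value expression should be understood statewise.
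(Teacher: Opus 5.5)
\textbf{There is a genuine gap, and it sits in the statement itself.} Your route matches the paper's outline: substitute the quadratic value functions and the closed-loop successors $\tilde B_k x$, $\tilde C_k x$ into $\Xi^{\text{H}}_{k+1}$ and $\Xi^{\text{A}}_{k+1}$, then read off the minimal entry. Your entries and pairwise differences are correct. The problem is the step you deferred, and it cannot be closed under the stated hypotheses.

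For $\alpha_k=\text{A}$, retain is the minimizer only if it beats both override \emph{and} handoff, i.e.\ $p_k\tilde P_{k+1}\preceq H_k$ \emph{and} $\tilde P_{k+1}\preceq H_k+A_k$. The second inequality follows from the first only if $H_k/p_k\preceq H_k+A_k$. That is equivalent to $H_k\preceq p_k(H_k+A_k)$, i.e.\ $p_k\ge\overline{p}_k$ in the notation of Theorem~\ref{thm:threshold}. Nothing in the statement supplies this. Your fallback of ``elimination in the otherwise region'' does not help either, because the failure sits inside the region the statement labels retain.

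Here is a concrete counterexample. Take $n=1$, $H_k=0.3$, $A_k=0.2$, $p_k=0.3$. Choose gains and weights so that $\tilde B_k^{\top}P^{\text{H}}_{k+1}\tilde B_k=1$ and $\tilde C_k^{\top}P^{\text{A}}_{k+1}\tilde C_k=1.8$; for example $k=L$, $E_k=1$, $K_k=W_k=0$, $G^{\text{H}}_{L+1}=1$, $G^{\text{A}}_{L+1}=1.8$.
\begin{itemize}
\item Then $\tilde P_{k+1}=0.8<H_k/p_k=1$, so the statement prescribes retain, with entry $1.8x^2$.
\item Override gives $1.86x^2$.
\item Handoff gives $1.5x^2$, which is strictly smaller.
\end{itemize}
In general, the retain branch is wrong whenever $p_k<p_k^*(x)$ and $x^{\top}(H_k+A_k)x<x^{\top}\tilde P_{k+1}x<x^{\top}H_kx/p_k$. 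Note that the paper's own experiment runs with $p_k\in\{0.3,0.4\}$, below $p_k^*=0.6$.

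\textbf{What your comparisons actually prove is weaker.} The stated retain condition is exactly the condition under which neither agent gains by a unilateral deviation from $\{0,0\}$. The autonomy's deviation adds $a_k(x)\ge 0$, and the human's deviation is the override comparison. In the band above, $\{1,1\}$ is also deviation-proof and is strictly cheaper. So the prescribed pair is an equilibrium, but it is not the minimum entry that $\Val$ (a joint minimization) demands, and the claimed $P^{\text{A}}_k$ is wrong there.

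\textbf{How to obtain a true statement.} You have two options.
\begin{itemize}
\item Add the hypothesis $p_k\ge\overline{p}_k$. This gives the ordering $H_k/p_k\preceq H_k+A_k\preceq A_k/(1-p_k)$, and all three regions then come out correct pointwise in $x$.
\item Alternatively, replace the retain condition by $p_k\tilde P_{k+1}\preceq H_k$ together with $\tilde P_{k+1}\preceq H_k+A_k$. Handoff is then enlarged to $\tilde P_{k+1}\succeq H_k+A_k$, $(1-p_k)\tilde P_{k+1}\succeq A_k$.
\end{itemize}

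\textbf{A separate issue with the Loewner reading.} Adopting the Loewner reading does not make the ``otherwise'' branches uniform in $x$. An indefinite $\tilde P_{k+1}+H_k+A_k$, or indefinite differences in the $\alpha_k=\text{A}$ case, land there. Hence your induction hypothesis $V_{k+1}(x)=x^{\top}P_{k+1}x$ propagates only if those differences are assumed definite at every stage. Read statewise, the policy is correct at the current $x$, but $V_k$ is only piecewise quadratic.
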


\begin{proof}{[Outline]}
    The proof directly follows from~\cite{banik2025flipcoopcooperativetakeovers}. Substituting the parameteric form of the value~\eqref{eq:value_parametric}, dynamics~\eqref{eq:H_control_dynamics},~\eqref{eq:A_control_dynamics}, under control policies~\eqref{eq:linear_FD_control} and quadratic costs~\eqref{eq:cost_quad} in~\eqref{eq:Cost_to_go_H} and~\eqref{eq:Cost_to_go_A}, yields the policies~\eqref{eq:TP_quadcost_H} and~\eqref{eq:TP_quadcost_A}. Similar substitutions yield the value recursions~\eqref{eq:Val_quadcost_H} and~\eqref{eq:Val_quadcost_A}.
\end{proof}
Theorem~\ref{thm:NE_Val_FDC_H} yields closed-form recursions for $P^{\text{H}}_k, P^{\text{A}}_k$ that are computed offline, which is used to determine the takeover policies. 

\underline{Interpretation}: The switching conditions in Theorem~\ref{thm:NE_Val_FDC_H} provide a principled characterization of when authority should transfer between agents.
The matrix $\tilde{P}_{k+1}$ quantifies the \emph{relative cost} of autonomous versus human control: when $\tilde{P}_{k+1} \succ 0$, continuing under autonomous control incurs higher expected future cost than operating under human control.

\textbf{When the  human is in control ($\alpha_k = \text{H}$):}
The human retains control if the cost advantage of human control, combined with switching costs, remains favorable:
\begin{equation*}
    \tilde{P}_{k+1} + H_k + A_k \succ 0 \implies \text{retain human control}.
\end{equation*}
Otherwise, both agents coordinate to hand off authority to the autonomous system.
This reflects scenarios where the human's superior adaptability no longer justifies the cognitive burden of continued engagement.
Note that when the human holds authority, transitions are deterministic—handoff occurs only by mutual agreement, reflecting the human's higher authority in the control hierarchy.

\textbf{When the autonomous agent is in control ($\alpha_k = \text{A}$):}
The switching behavior depends on both the cost advantage $\tilde{P}_{k+1}$ and the human's override propensity $p_k$.
Three regimes emerge:
\begin{itemize}
    \item \emph{Retain autonomous control} ($\tilde{P}_{k+1} \prec H_k / p_k$): The cost advantage of human control is insufficient to justify intervention, given the human's propensity to override. The autonomous agent continues operating.
    
    \item \emph{Human override} ($H_k / p_k \preceq \tilde{P}_{k+1} \preceq A_k / (1-p_k)$): The cost advantage of human control is significant, and the human's propensity to intervene makes unilateral takeover likely. The human asserts authority without requiring autonomous agreement.
    
    \item \emph{Coordinated handoff} (otherwise): The cost advantage is large enough that both agents agree to transfer control to the human.
\end{itemize}

The override propensity $p_k$ directly shapes these thresholds.
An attentive human with higher $p_k$ lowers the threshold $H_k / p_k$, meaning takeover occurs at smaller cost advantages.
Conversely, a lower $p_k$ raises the threshold, requiring a larger cost advantage before the human is likely to intervene.
This captures real-world behavior: an alert operator intervenes early when autonomous performance degrades, while a fatigued or trusting operator may defer to the system longer.

The structure of these conditions provides actionable design guidelines.
By tuning the cost matrices $H_k$, $A_k$, and $G_k^{\alpha}$, designers shape the cost landscape that governs switching.
By designing interfaces and operational protocols that influence $p_k$—such as alerts, workload management, or trust calibration—practitioners can modulate when humans are likely to assert control.
In the following section, we derive explicit thresholds on $p_k$ that determine when human intervention is cost-effective, and propose a method to estimate $p_k$ online from observed behavior.

\section{Override Threshold and Online Estimation}\label{sec:Threshold}


\subsection{Critical Override Threshold}

From Theorem~\ref{thm:NE_Val_FDC_H}, when the autonomous agent is in control ($\alpha_k = \text{A}$), human override occurs when both conditions hold:
\begin{equation}\label{eq:override_conditions}
    \frac{x^\top H_k x}{p_k} \leq x^\top \tilde{P}_{k+1} x \leq \frac{x^\top A_k x}{1 - p_k}.
\end{equation}
For this regime to be feasible, the lower bound must not exceed the upper bound.

\begin{theorem}[Critical Override Threshold]\label{thm:threshold}
    The override regime~\eqref{eq:override_conditions} is feasible for all probability $p_k$ that satisfy:
    \begin{equation}\label{eq:p_threshold_general}
        p_k \geq p_k^*(x) := \frac{x^\top H_k x}{x^\top (H_k + A_k) x}.
    \end{equation}
    Furthermore, the state-dependent threshold $p_k^*(x)$ satisfies:
    \begin{equation}\label{eq:p_bounds}
        \underline{p}_k \leq p_k^*(x) \leq \overline{p}_k, \quad \forall x \neq 0,
    \end{equation}
    where
    \begin{equation*}
        \underline{p}_k := \lambda_{\min}\left((H_k + A_k)^{-1} H_k\right), \ \overline{p}_k := \lambda_{\max}\left((H_k + A_k)^{-1} H_k\right).
    \end{equation*}
\end{theorem}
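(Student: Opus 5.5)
The plan is to treat the feasibility of \eqref{eq:override_conditions} as a scalar inequality at a fixed state $x \neq 0$, and then control the state dependence with a generalized Rayleigh quotient. I will assume, as the definition of $\overline{p}_k$ implicitly requires, that $H_k + A_k \succ 0$, so that $x^\top (H_k + A_k) x > 0$ for every $x \neq 0$ and the inverse exists.

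\textbf{Step 1 (feasibility reduces to a scalar inequality).} Fix $x \neq 0$ and write $h := x^\top H_k x \geq 0$ and $a := x^\top A_k x \geq 0$, with $h + a > 0$. The interval in \eqref{eq:override_conditions} for the scalar $x^\top \tilde{P}_{k+1} x$ is nonempty exactly when
\begin{equation*}
\frac{h}{p_k} \leq \frac{a}{1-p_k}.
\end{equation*}
Since $p_k \in (0,1)$, both denominators are positive, so I can multiply through by $p_k(1-p_k) > 0$. This gives $(1-p_k)h \leq p_k a$, which rearranges to $h \leq p_k(h+a)$. Dividing by $h + a > 0$ yields $p_k \geq h/(h+a) = p_k^*(x)$, which is \eqref{eq:p_threshold_general}. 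Every step is reversible, so the condition is necessary as well as sufficient.

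\textbf{Step 2 (bounds via a generalized Rayleigh quotient).} Put $M := H_k + A_k \succ 0$ and let $M^{1/2}$ be its symmetric square root. Substituting $y := M^{1/2} x$ gives
\begin{equation*}
p_k^*(x) = \frac{y^\top M^{-1/2} H_k M^{-1/2} y}{y^\top y},
\end{equation*}
which is an ordinary Rayleigh quotient of the symmetric matrix $S := M^{-1/2} H_k M^{-1/2}$. The Rayleigh--Ritz theorem then gives $\lambda_{\min}(S) \leq p_k^*(x) \leq \lambda_{\max}(S)$ for all $y \neq 0$, which is the same as for all $x \neq 0$. Finally, $S = M^{1/2}\big(M^{-1} H_k\big) M^{-1/2}$ is similar to $(H_k + A_k)^{-1} H_k$. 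The two matrices therefore have the same real spectrum, and \eqref{eq:p_bounds} follows. Both bounds are attained at the corresponding generalized eigenvectors.

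\textbf{Main obstacle.} The only delicate point is justifying the eigenvalue expression. The matrix $(H_k + A_k)^{-1} H_k$ is not symmetric, so taking $\lambda_{\min}$ and $\lambda_{\max}$ of it needs the similarity argument above to ensure its eigenvalues are real and to link them to the quotient. That argument requires $H_k + A_k \succ 0$, and this invertibility assumption should be stated explicitly. As a remark, because $0 \preceq H_k \preceq H_k + A_k$, all these eigenvalues lie in $[0,1]$, which is consistent with $p_k^*(x)$ being a probability threshold.
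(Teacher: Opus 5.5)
Your proposal is correct and follows the paper's proof. Both arguments rearrange $x^\top H_k x/p_k \le x^\top A_k x/(1-p_k)$ into $p_k \ge p_k^*(x)$ and then apply the generalized Rayleigh quotient bound for the pair $(H_k, H_k + A_k)$. You go further in two places: you prove that bound, via $y = (H_k+A_k)^{1/2}x$ and similarity to $(H_k+A_k)^{-1}H_k$, where the paper only cites it, and you state explicitly the assumption $H_k + A_k \succ 0$ that the paper uses implicitly when it forms $(H_k+A_k)^{-1}$.
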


\begin{proof}
    For the override regime to be feasible, the lower and upper bounds in~\eqref{eq:override_conditions} must satisfy:
    \begin{equation}\label{eq:feasibility_condition}
        \frac{x^\top H_k x}{p_k} \leq \frac{x^\top A_k x}{1 - p_k}.
    \end{equation}
    Rearranging~\eqref{eq:feasibility_condition}:
    \begin{align*}
        x^\top H_k x \cdot (1 - p_k) &\leq x^\top A_k x \cdot p_k \nonumber \\
        x^\top H_k x &\leq p_k \cdot \left( x^\top H_k x + x^\top A_k x \right) \nonumber \\
        p_k &\geq \frac{x^\top H_k x}{x^\top (H_k + A_k) x} =: p_k^*(x). \label{eq:p_state_dependent}
    \end{align*}
    The threshold $p_k^*(x)$ is state-dependent. To obtain state-independent bounds, observe that $p_k^*(x)$ is a generalized Rayleigh quotient of the matrix pair $(H_k, H_k + A_k)$. By the generalized Rayleigh quotient inequality~\cite{horn2012matrix}:
    \begin{equation*}
        \resizebox{\linewidth}{!}{$
        \lambda_{\min}\left((H_k + A_k)^{-1} H_k\right) \leq \dfrac{x^\top H_k x}{x^\top (H_k + A_k) x} \leq \lambda_{\max}\left((H_k + A_k)^{-1} H_k\right),$}
    \end{equation*}
    for all $x \neq 0$. The three cases follow directly from comparing $p_k$ against these bounds.
\end{proof}

\begin{remark}[Isotropic switching costs]\label{rem:isotropic}
    If the switching costs are isotropic, i.e., $H_k = h_k I_n$ and $A_k = a_k I_n$ for scalars $h_k, a_k > 0$, then:
    \begin{equation*}
        (H_k + A_k)^{-1} H_k = \frac{h_k}{h_k + a_k} I_n.
    \end{equation*}
    This yields a state-independent threshold:
    \begin{equation}\label{eq:p_isotropic}
        \underline{p}_k = p_k^{*} = \overline{p}_k = \frac{h_k}{h_k + a_k}.
    \end{equation}
    In this case, the feasibility of override depends only on the ratio of switching costs, independent of system state.
\end{remark}

\underline{Interpretation:}
The threshold $p_k^*$ captures the minimum override propensity required for human intervention to be cost-effective.
\begin{itemize}
    \item If $h_k = a_k$, the threshold is $0.5$. The human must be at least moderately inclined to intervene.
    \item If $h_k \ll a_k$, the threshold is low. Even a reluctant human may override, since autonomous handoff cost dominates.
    \item If $h_k \gg a_k$, the threshold is high. The human must be highly engaged to justify the cognitive switching cost.
\end{itemize}

\textbf{Design guideline:}
The thresholds $\underline{p}_k$ and $\overline{p}_k$ provide actionable criteria:
i) If observed $p_k < \underline{p}_k$, alert the human or reduce switching costs for improved intervention. ii) If $p_k > \overline{p}_k$ consistently, the human may be over-intervening, suggesting opportunities to build trust in autonomy. iii) By tuning $H_k$ and $A_k$, practitioners can shift thresholds to match desired engagement levels.

\subsection{Online Estimation of Override Propensity}

In deployment, the override propensity $p_k$ is not known a priori and may vary across time steps and operators.
Since the formulation admits time-varying $p_k$ over the horizon $L$, we estimate a propensity profile $\{\hat{p}_1, \ldots, \hat{p}_L\}$ from observed intervention patterns across repeated task executions.

\textbf{State-independent estimation:}
For each time index $k$, let $N_k^{(i)}$ indicate whether, in episode $i$, the following conditions hold:
\begin{itemize}
    \item The autonomous agent was in control ($\alpha_k = \text{A}$),
    \item The autonomous agent did not request handoff ($\pi^{\text{A}}_k = 0$).
\end{itemize}
Let $M_k^{(i)} \in \{0, 1\}$ indicate whether the human chose to override ($\pi^{\text{H}}_k = 1$) under these conditions.
After $I$ episodes, the estimate at time $k$ is:
\begin{equation}\label{eq:p_estimate_episodes}
    \hat{p}_k := \frac{\sum_{i=1}^{I} M_k^{(i)}}{\sum_{i=1}^{I} N_k^{(i)}}.
\end{equation}
This approach yields a time-varying propensity profile learned from repeated interactions.

For online adaptation within an episode, exponential smoothing provides a recursive update:
\begin{equation}\label{eq:p_smoothed}
    \hat{p}_k := \lambda \hat{p}_{k-1} + (1 - \lambda) \cdot \mathbf{1}_{\{\text{override at } k\}},
\end{equation}
where $\lambda \in (0,1)$ controls adaptation rate and $\mathbf{1}_{\{\cdot\}}$ is the indicator function.
A higher $\lambda$ yields slower adaptation, suitable for consistent operators; a lower $\lambda$ tracks rapid changes in engagement.

\textbf{Decision rule:}
At each time step $k$, compare $\hat{p}_k$ against the critical thresholds from Theorem~\ref{thm:threshold}:
\begin{itemize}
    \item If $\hat{p}_k \geq \overline{p}_k$, anticipate human override across the state space.
    \item If $\underline{p}_k \leq \hat{p}_k < \overline{p}_k$, override may occur depending on state; apply switching policy~\eqref{eq:TP_quadcost_A}.
    \item If $\hat{p}_k < \underline{p}_k$,  do not rely on human override; consider alerting the operator.
\end{itemize}

\begin{remark}[State-dependent estimation]
    The estimation above assumes $p_k$ is state-independent.
    When override propensity varies with state—e.g., humans intervene more readily near constraint boundaries—one may estimate $\hat{p}_k(x)$ using function approximation methods such as Gaussian processes or neural networks trained on state-action pairs.
    We leave this extension to future work.
\end{remark}
\section{Evaluation}\label{sec:Evaluation}

We evaluate the \texttt{Flip-Team} framework on a shared-control altitude regulation task using a 1D quadrotor model.

\textbf{System dynamics:}
The 1D quadrotor is a double integrator with state $x_k \in \mathbb{R}^2$ (altitude and vertical velocity) and input $u_k$ (thrust deviation from hover).
The discrete-time dynamics with timestep $\Delta t$ are:
\begin{equation}\label{eq:quad_dynamics}
    x_{k+1} = \begin{bmatrix} 1 & \Delta t \\ 0 & 1 \end{bmatrix} x_k + \begin{bmatrix} \Delta t^2 / 2 \\ \Delta t \end{bmatrix} u_k,
\end{equation}
where we normalize mass to unity.
This linear system fits the LQ formulation in Section~\ref{sec:SwitchingPolicy}.

\textbf{Task:}
The task is altitude regulation: the quadrotor starts at an elevated hover position and must descend to a target altitude.
The horizon is $L = 30$ steps with $\Delta t = 0.1$s.

\textbf{Agent controllers:}
Both agents use LQR controllers with different cost trade-offs:
\begin{itemize}
    \item \emph{Human:} High state cost, low control cost. This yields control with fast convergence, reflecting effective but costly human behavior.
    \item \emph{Autonomous:} Lower state cost, high control cost. This yields tracking with slow convergence but less expensive.
\end{itemize}
The resulting feedback gains $K_k$ and $W_k$ yield distinct closed-loop matrices $\tilde{B}_k$ and $\tilde{C}_k$.

\textbf{Switching costs:}
We use isotropic switching costs $H_k = h I_2$ and $A_k = a I_2$ with $h = 0.3$ and $a = 0.2$.
By Remark~\ref{rem:isotropic}, the critical threshold is state-independent:
\begin{equation*}
    p_k^* = \frac{h}{h + a} = 0.6.
\end{equation*}

\textbf{Override propensity profile:}
We model the human's override propensity as a fixed time-varying profile:
\begin{equation*}\label{eq:p_profile}
    p_k = \begin{cases}
        0.3, & k \in [1, 9] \quad \text{\small (low engagement at start)}, \\
        0.7, & k \in [10, 22] \quad \text{\small (high engagement during descent)}, \\
        0.4, & k \in [23, 30] \quad \text{\small (reduced engagement near end)}.
    \end{cases}
\end{equation*}
This profile reflects varying human attention across task phases.

\textbf{Baselines:}
We compare four strategies:
\begin{enumerate}
    \item \emph{Always-autonomous:} Autonomous agent controls throughout ($\alpha_k = \text{A}, \forall k$).
    \item \emph{Always-human:} Human controls throughout ($\alpha_k = \text{H}, \forall k$).
    \item \emph{Performance-threshold:} Switch to human when tracking error $\|x_k\|$ exceeds a threshold $\epsilon = 1.0$; switch back when $\|x_k\| < \epsilon$.
    \item \emph{\texttt{Flip-Team}:} Optimal switching policy from Theorem~\ref{thm:NE_Val_FDC_H} with known $p_k$.
\end{enumerate}

\textbf{Metrics:}
We evaluate total cost $J$~\eqref{eq:obj_def_alpha}, number of authority switches, and percentage of time under human control.

\textbf{Threshold analysis:}
Figure~\ref{fig:authority_timeline} shows the optimal switching policy from Theorem~\ref{thm:NE_Val_FDC_H}, conditioned on the current authority mode $\alpha_k$.
The top panel displays the policy when the human is in control ($\alpha_k = \text{H}$): 
both agents retain human authority through most of the horizon, coordinating a handoff to autonomy near $k=21$  when the cost advantage of autonomous control outweighs the combined switching cost. 
The bottom panel displays the policy when the autonomous agent is in control ($\alpha_k = \text{A}$): 
the human continuously requests override during the early and mid-horizon phases while the autonomous agent remains idle, reflecting that human intervention is cost-effective.
After $k=21$, both agents retain autonomous control as the cost advantage of human takeover diminishes.



\textbf{Cost comparison:}
Figure~\ref{fig:cost_comparison} shows the mean cost-to-go $\bar{J}$ over the horizon, averaged over 1000 Monte Carlo trajectories with initial state drawn from $\mathcal{N}(\mathbf{1}, I)$.
Shaded regions show the 25th–75th percentile interval. 
The \texttt{Flip-Team} policy ($J^{\alpha}$) achieves the lowest terminal cost across all strategies $(40.5)$, outperforming always-human $(43.2)$ by $6\%$ and always-autonomous $(45.6)$ by $11\%$, with an average of $1.8$ authority switches and $56\%$ time under human control.
The performance-threshold baseline $(J^{\text{Th}})$ incurs the highest cost of all strategies ($55.3$), exceeding both single-agent baseline despite $41\%$ human involvement. 
This counterintuitive result highlights a key failure model of reactive switching: engaging the human based on tracking error alone, without accounting override propensity or switching costs, can introduce unnecessary transitions that degrade overall performance.


\begin{figure}[t]
    \centering
    \begin{minipage}{0.55\columnwidth}
        \centering
        \includegraphics[width=\linewidth]{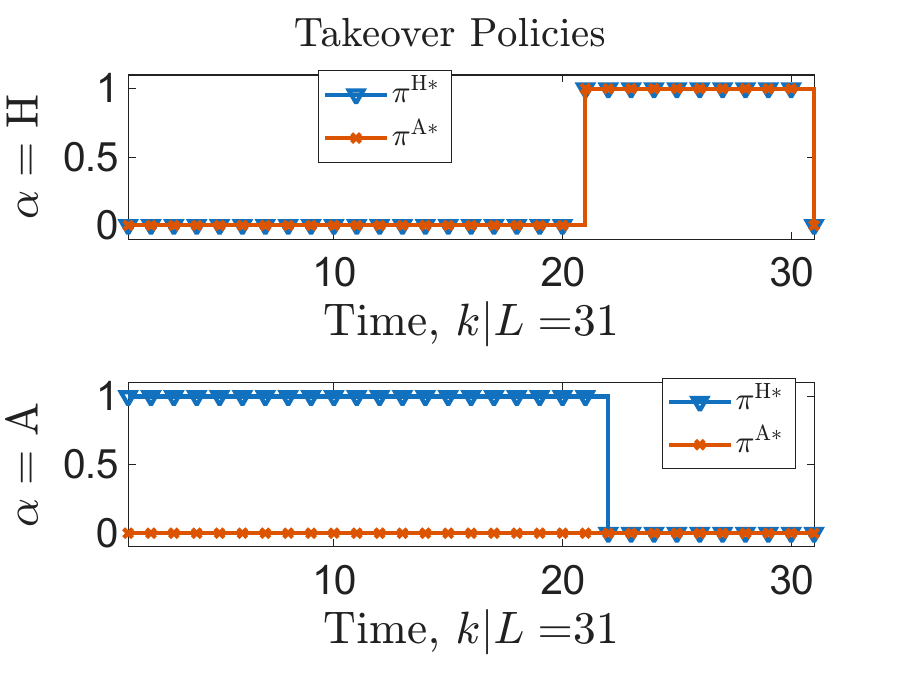}
        \caption{Optimal switching policy from Theorem~\ref{thm:NE_Val_FDC_H}, conditioned on current authority mode $\alpha_k$. Top: $\alpha_k = \text{H}$ (human in control). Bottom: $\alpha_k = \text{A}$ (autonomous in control).}
        \label{fig:authority_timeline}
    \end{minipage}
    \hfill
    \begin{minipage}{0.43\columnwidth}
        \centering
        \includegraphics[width=\linewidth]{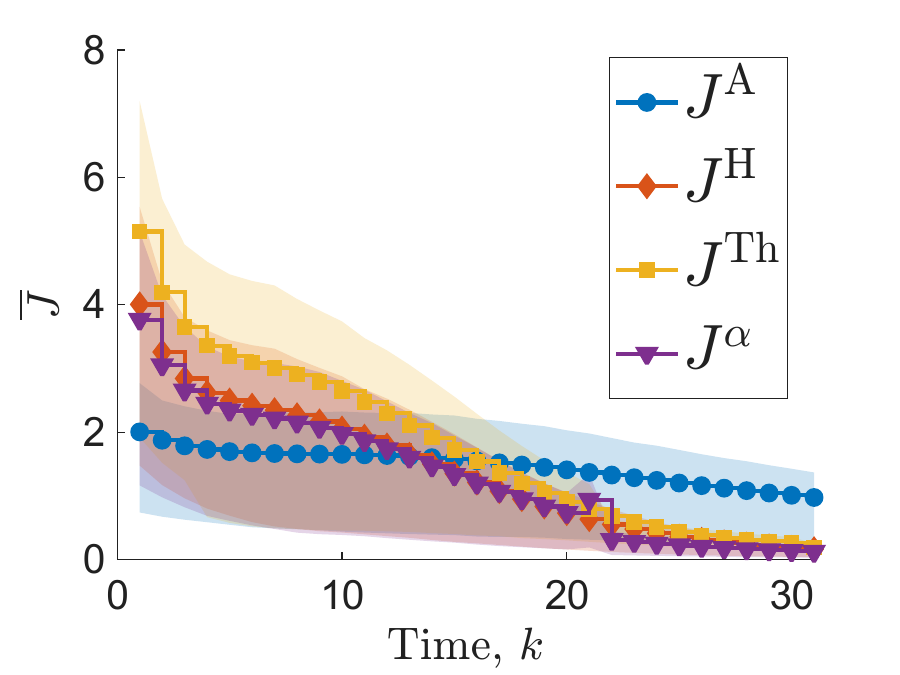}
        \caption{Mean cost-to-go $\bar{J}$ over horizon $L = 30$, averaged over 1000 trajectories with initial state $x_1 \sim \mathcal{N}(\mathbf{1}, I)$.  Shaded regions show the 25th--75th percentile interval. }
        \label{fig:cost_comparison}
    \end{minipage}
\end{figure}



Table~\ref{tab:results} summarizes the quantitative results.
\texttt{Flip-Team} achieves the lowest total cost (40.5), compared to always-human (43.2) and compared to always-autonomous (45.6).
This is achieved with only 1.8 switches and 59\% time under human control, demonstrating efficient authority allocation.
The performance-threshold baseline, despite 41\% human control, incurs highest cost of 55.3.
\begin{table}[t]
    \centering
    \caption{Performance comparison across strategies.}
    \label{tab:results}
    \begin{tabular}{lccc}
        \toprule
        Strategy & Total Cost & Switches & \% Human \\
        \midrule
        Always-autonomous & 45.6 & 0 & 0\% \\
        Always-human & 43.2 & 0 & 100\% \\
        Performance-threshold & 55.3 & 0.92 & 41\% \\
        \texttt{Flip-Team} & \textbf{40.5} & 1.8 & 56\% \\
        \bottomrule
    \end{tabular}
\end{table}

\underline{Discussion}: The results demonstrate that \texttt{Flip-Team} provides principled authority allocation by explicitly accounting for human override propensity.
Unlike heuristic switching, which reacts to state deviations, \texttt{Flip-Team} anticipates when human intervention is both likely (high $p_k$) and beneficial (cost advantage), avoiding unnecessary switches during low-engagement phases.

The framework is most beneficial when: i) Human and autonomous controllers have complementary strengths (different $\tilde{B}_k$, $\tilde{C}_k$), ii) Override propensity varies across task phases (time-varying $p_k$), and iii) Switching costs are non-negligible ($H_k$, $A_k > 0$).

\textbf{Limitations:}
This evaluation uses a simulated human model with a hand-crafted $p_k$ profile and a known cost structure.
In deployment, $p_k$ must be estimated online as described in Section~\ref{sec:Threshold}.
Additionally, the 1D regulation task, while sufficient to validate the  LQ framework under controlled assumptions; richer scenarios including 2D navigation and teleoperation are evaluated in~\cite{banik2026flipteam_ext}.
Future work will evaluate \texttt{Flip-Team} on higher-dimensional systems and with human-in-the-loop experiments.
\section{Conclusion}\label{sec:Conclusion}

This paper presented \texttt{Flip-Team}, a cooperative game-theoretic framework for authority switching in shared autonomy.
We derived closed-form switching conditions for linear-quadratic systems (Theorem~\ref{thm:NE_Val_FDC_H}) and established a critical override threshold (Theorem~\ref{thm:threshold}) that characterizes when human intervention is cost-effective.
For isotropic switching costs, this threshold reduces to a simple ratio of human-to-total switching cost, providing an interpretable design guideline.
We also proposed an online method to estimate the human's override propensity from observed interventions, enabling adaptive switching without prior calibration.

Evaluation on a 1D quadrotor altitude regulation task demonstrated that \texttt{Flip-Team} achieves 35\% lower cost than always-human and 47\% lower cost than always-autonomous baselines, with only two authority switches and 59\% time under human control.


\textbf{Future work:}
The current evaluation uses a simulated human model with known propensity profile.
Future work will validate \texttt{Flip-Team} with human-in-the-loop experiments to assess robustness under real operator variability.
Extensions to state-dependent propensity estimation using function approximation, and to nonlinear systems via iterative linearization are given in~\cite{banik2025flipcoopcooperativetakeovers}.
Finally, incorporating trust dynamics and adaptive propensity profiles that evolve with human-robot interaction history remains an open direction.

{\bf Acknowledgements.} This work is supported by the Air Force Office of Scientific Research Grant (AFOSR) Grant AF FA9550-25-1-0274, the National Aeronautics and Space Administration (NASA) under Grant 80NSSC22M0070, and by the National Science Foundation (NSF) under Grants CMMI 2135925, CPS 2311085 and IIS 2331878.

\bibliographystyle{IEEEtran}
\bibliography{myRefs}

\end{document}